\newcommand{\version}{March 7, 2014}
\theoremstyle{plain}
\newtheorem{thm}{THEOREM}[section]
\theoremstyle{definition}
\theoremstyle{definition}
\newcommand{\upchi}{\raise1pt\hbox{$\chi$}}
\newcommand{\tr}{{\rm Tr}}
\numberwithin{equation}{section}
\def\H{\mathcal{H}}
\begin{document}


\def\tr{{\rm Tr}}

\title{Remainder Terms for Some Quantum Entropy Inequalities}
\author{\vspace{5pt} Eric A. Carlen  and
Elliott H. Lieb \\
\vspace{5pt}\small{$1.$ Department of Mathematics, Hill Center,}\\[-6pt]
\small{Rutgers University,
110 Frelinghuysen Road
Piscataway NJ 08854-8019 USA}\\
\vspace{5pt}\small{$2.$ Departments of Mathematics and
Physics, Jadwin
Hall,} \\[-6pt]
\small{Princeton University, Washington Road, Princeton, NJ
  08544-0001}\\
 }
\date{\version}
\maketitle

\let\thefootnote\relax\footnote{
\copyright \, 2014 by the authors. This paper may be
reproduced, in its
entirety, for non-commercial purposes.}

\begin{abstract}
We consider three von Neumann entropy inequalities: subadditivity;
Pinsker's inequality for relative entropy; and the monotonicity  of 
relative entropy. For these we state conditions for equality,  and we prove 
some new error bounds away from
equality, including an improved version of Pinsker's inequality.
\end{abstract}

\medskip
\leftline{\footnotesize{\qquad Mathematics subject
classification numbers: 81V99, 82B10, 94A17}}
\leftline{\footnotesize{\qquad Key Words: Density matrix, Entropy, Partial
trace }}

\section{Introduction}

We are concerned with the von Neumann entropy of a density matrix $\rho$
defined by $S(\rho) = -\tr \rho \ln \rho$. When $\rho_{12}$
is defined on $\H_1\otimes \H_2$, let $\rho_1 = \tr_2\rho_{12}$, etc., 
Let $S_{12}$ denote the entropy of $\rho_{12}$, $S_1$ the entropy of
$\rho_1$ and $S_2$ the entropy of $\rho_2$. 

Subadditivity of entropy \cite{l} states that $S_{12} \leq S_1 +S_2$ and
that
there is
equality if and only if $\rho_{12}  = \rho_1 \otimes \rho_2$. 
The kind of inequality we consider here is one that gives a (sharp)  
remainder term whenever $\rho_{12}$ is not of this product form. 
Inequalities of this type are known, as we explain below,  but, unlike the
one we prove here, 
they are only saturated when $S_{12} =S_1 +S_2 $.

The method by which we find and prove such a remainder term turns out to
be useful for other entropy inequalities, among which we discuss  an
improved
remainder term for the positivity of relative entropy (Pinsker-Cziszar)
and the
monotonicity
of relative entropy.

Note that proving the existence of a remainder term requires, by
definition, knowing the cases of equality. Thus, our results necessarily
include statements about the minimizers of the various entropy functionals. 
A particular example is a new condition for equality in the monotonicity of
relative entropy, which is not obviously the same as the known condition.
We shall only speak of von Neumann (and Shannon, in the
classical cases)  entropies here.

\section{Theorems} \label{intro}

\subsection{Quantitative subadditivity (mutual information)}

The mutual information of two subsystems in a combined state $\rho_{12}$ is
defined to be $S_{1} +S_2 -S_{12}$ which is non-negative by the
subadditivity of entropy. It is known that this is zero if and only it
$\rho_{12}$ is a product state. The following quantifies mutual information
in terms of departure from the product state condition.
\begin{thm}[Quantitative subadditivity]\label{quant111}
\begin{equation}\label{111state}
S_1 +S_2 - S_{12} \geq  -2\ln\left(1 - \tfrac12 \tr
\left[\sqrt{\rho_{12}}-\sqrt{\rho_1\otimes \rho_2}\right]^2\right).
\end{equation}
In particular, $S_1 +S_2 - S_{12} \geq 0$ with equality if and only if 
$\rho_{12} = \rho_1\otimes \rho_2$. 

More generally, with an obvious notation, if 
$\rho_{1\cdots N}$ is a density matrix on $\H_1\otimes \cdots \otimes
\H_N$
then  

\begin{equation} \label{nstate}
\sum_1^N S_j- S_{1\cdots N}
\geq -2\ln\left(1 - \tfrac12\tr 
\left[ \sqrt{\rho_{1\cdots N} } - \sqrt{\rho_1 \otimes \cdots \otimes
\rho_N}\right]^2\right).
\end{equation}
\end{thm}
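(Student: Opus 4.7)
The plan is to rewrite both sides of \eqref{111state} so that it becomes a single inequality about relative entropy. Using $\tr\rho_{12} = \tr(\rho_1\otimes\rho_2) = 1$ and expanding the square gives
\[
\tr\bigl[\sqrt{\rho_{12}}-\sqrt{\rho_1\otimes\rho_2}\bigr]^2 \;=\; 2 - 2\tr\sqrt{\rho_{12}}\sqrt{\rho_1\otimes\rho_2},
\]
so the argument of the logarithm on the right of \eqref{111state} is exactly $\tr\sqrt{\rho_{12}}\sqrt{\rho_1\otimes\rho_2}$, which is nonnegative as the Hilbert--Schmidt inner product of two positive operators. On the left, using $\ln(\rho_1\otimes\rho_2) = (\ln\rho_1)\otimes\id + \id\otimes(\ln\rho_2)$ gives the familiar rewriting of the mutual information as a relative entropy,
\[
S_1 + S_2 - S_{12} \;=\; S(\rho_{12}\|\rho_1\otimes\rho_2).
\]
Thus \eqref{111state}, and equally \eqref{nstate} once one takes $\sigma = \rho_1\otimes\cdots\otimes\rho_N$, is equivalent to the single bound
\[
S(\rho\|\sigma) \;\geq\; -2\ln\tr\sqrt{\rho}\sqrt{\sigma}
\]
for every pair of density matrices $\rho,\sigma$.

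\textbf{Reduction to a one-parameter convexity statement.} I would prove this reduced inequality by studying $\phi(t) := \ln\tr\rho^{1-t}\sigma^t$ on $[0,1]$. Direct computation gives $\phi(0)=\phi(1)=0$ and
\[
\phi'(0) \;=\; \tr\rho\,(\ln\sigma - \ln\rho) \;=\; -S(\rho\|\sigma).
\]
The crux is then to show that $\phi$ is convex on $[0,1]$. Granting this, the tangent-line bound yields $\phi(t) \geq \phi(0) + t\phi'(0) = -t\,S(\rho\|\sigma)$ on $[0,1]$, and setting $t=1/2$ gives $\ln\tr\sqrt{\rho}\sqrt{\sigma} \geq -\tfrac12 S(\rho\|\sigma)$, which is exactly the reduced inequality.

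\textbf{Main obstacle, the $N$-fold case, and equality.} The substantive step is the convexity of $\phi$---equivalently, the monotonicity in $\alpha$ of the Petz $\alpha$-R\'enyi divergence $\frac{1}{\alpha-1}\ln\tr\rho^\alpha\sigma^{1-\alpha}$ and the noncommutative analogue of the log-convexity of $t\mapsto \sum_i p_i^{1-t}q_i^t$ (which in the classical case is immediate from H\"older). This is where the work lies, and I would establish it either via a complex-interpolation (three-lines) argument applied to the entire function $z \mapsto \tr\rho^{1-z}\sigma^z$ or via the Araki--Lieb--Thirring trace inequality. The $N$-fold statement \eqref{nstate} is then immediate: apply the reduced inequality with $\sigma = \rho_1\otimes\cdots\otimes\rho_N$, using that $\sum_{j=1}^N S_j - S_{1\cdots N} = S(\rho_{1\cdots N}\|\rho_1\otimes\cdots\otimes\rho_N)$ by the same computation as for $N=2$. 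As for the equality claim, one direction is trivial (if $\rho_{12}=\rho_1\otimes\rho_2$ then $\phi\equiv 0$ and both sides of \eqref{111state} vanish); conversely, equality forces saturation of the tangent-line bound at $t=1/2$, and the strict convexity of $\phi$---which fails only when $\rho=\sigma$---then gives $\rho_{12}=\rho_1\otimes\rho_2$.
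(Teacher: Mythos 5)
Your argument is correct, but it takes a different route from the paper's primary proof, which applies the Peierls--Bogoliubov inequality (with $H=-\ln\rho_{12}$ and $A=\tfrac12(\ln\rho_1+\ln\rho_2-\ln\rho_{12})$) followed by the Golden--Thompson inequality to bound $e^{(S_{12}-S_1-S_2)/2}$ directly by $\tr\big[\rho_{12}^{1/2}(\rho_1\otimes\rho_2)^{1/2}\big]$. Your reduction to the single inequality $D(\rho||\sigma)\ge -2\ln\tr\sqrt{\rho}\sqrt{\sigma}$, derived from the behavior of $\alpha\mapsto\tfrac{1}{\alpha-1}\ln\tr\rho^{\alpha}\sigma^{1-\alpha}$, is exactly the paper's ``second proof'' combined with the R\'enyi-monotonicity observation the authors credit to M.~Wilde; the paper itself proves that reduced inequality by Peierls--Bogoliubov and Golden--Thompson rather than by convexity. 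Two remarks. First, the step you flag as the main obstacle --- convexity of $\phi(t)=\ln\tr\rho^{1-t}\sigma^{t}$ --- needs neither complex interpolation nor Araki--Lieb--Thirring: writing $\rho=\sum_i a_i|i\rangle\langle i|$ and $\sigma=\sum_j b_j|j\rangle\langle j|$ gives
\[
\tr\rho^{1-t}\sigma^{t}=\sum_{i,j}|\langle i|j\rangle|^{2}\,a_i^{1-t}b_j^{t},
\]
a nonnegative combination of exponentials in $t$, which is log-convex by H\"older; with this elementary observation your proof is complete. Second, for the equality statement it is cleaner to note that the right side of \eqref{111state} equals $-2\ln\tr\sqrt{\rho_{12}}\sqrt{\rho_1\otimes\rho_2}\ge 0$ by Cauchy--Schwarz, so $S_1+S_2-S_{12}=0$ forces $\tr\big[\sqrt{\rho_{12}}-\sqrt{\rho_1\otimes\rho_2}\big]^{2}=0$ and hence $\rho_{12}=\rho_1\otimes\rho_2$; this avoids the support and degeneracy analysis needed to decide exactly when strict convexity of $\phi$ fails. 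As for what each approach buys: the Peierls--Bogoliubov/Golden--Thompson route generalizes to the remainder term for monotonicity of relative entropy in the paper's second theorem, where the triple-matrix Golden--Thompson inequality replaces the ordinary one and no R\'enyi interpolation is available, while your route isolates the clean scalar inequality $D(\rho||\sigma)\ge-2\ln\tr\sqrt{\rho}\sqrt{\sigma}$ and makes transparent why the bound \eqref{111state} can be saturated at large mutual information, where Pinsker's inequality is far from sharp.
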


\begin{proof}
Recall the Peierls-Bogoliubov inequality: If $H$ and $A$ are self
adjoint
operators and $\tr e^{-H} =1$, then 
\begin{equation}\label{PB}
 \tr\left(e^{-H+A}\right) \geq \exp(\tr A\, e^{-H})\ .
\end{equation}
To prove (\ref{111state}), apply this with 
$$H = - \ln \rho_{12} \qquad{\rm and}\qquad A = \tfrac12(\ln \rho_1 +
\ln\rho_2 - \ln\rho_{12})\ .$$
Then with $\Delta:= \tfrac12 (S_{12} - S_1 - S_2)$, by the
Peierls-Bogoliubov
inequality and the Golden-Thompson inequality,
\begin{eqnarray}
e^\Delta &=& \exp \left [\tr\rho_{12}\tfrac12 (\ln \rho_1 + \ln\rho_2
-
\ln\rho_{12})\right]\nonumber\\
&\leq &  \tr \exp\left[ \tfrac12 (\ln \rho_{12} +  \ln(\rho_1\otimes
\rho_2)\right]\nonumber\\
&\leq &  \tr \exp\left[ \tfrac12 \ln \rho_{12}
\right]\exp\left[\tfrac12 
\ln(\rho_1\otimes \rho_2)\right]\nonumber\\
&=& \tr \left[ \rho_{12}^{1/2} (\rho_1\otimes \rho_2)^{1/2} \right]\
.\nonumber
\end{eqnarray}
Since
\begin{equation}\label{proj4}
\tr \left[ \rho_{12}^{1/2} (\rho_1\otimes \rho_2)^{1/2} \right]  =
\left(1 - 
\tfrac12\tr \left[ \rho_{12}^{1/2} -(\rho_1\otimes \rho_2)^{1/2}
\right]^2\right)\ ,
\end{equation}
this proves (\ref{111state}). An obvious adaptation proves
(\ref{nstate}).  
\end{proof}

Another way to obtain a quantitative subadditivity bound is to use the fact
that subadditivity can be seen as a consequence 
of the positivity of relative entropy, and that the well known Pinsker inequlity provides a 
positive lower bound on the relative entropy. 

Let $\rho$ and $\sigma$  be density matrices. The {\em relative entropy} of
$\rho$ with respect to $\sigma$,
$D(\rho||\sigma)$, is defined by
$$D(\rho||\sigma)  =  \tr \left(\rho\, [ \ln \rho - \ln \sigma]\right)\ .$$
One readily computes that 
\begin{equation}\label{subrel}
S_1+S_2 - S_{12} = D(\rho_{12}|| \rho_1\otimes \rho_2)\ .
\end{equation}
Pinsker's inequality \cite{Pin,Csi} states that 
\begin{equation}\label{pinsk}
D(\rho||\sigma)  \geq    
\frac12 \left(\tr | \rho -\sigma| \right)^2\ .
\end{equation}
There is equality in \eqref{pinsk} only when both sides are zero. Indeed,
refinements of Pinsker's inequality can be found in \cite{FHT} in the form
of an expansion in powers of $\left(\tr | \rho -\sigma| \right)^2$   with
positive coefficients, 
the right side of
\eqref{pinsk} being the first.

The combination of (\ref{subrel}) and \eqref{pinsk}   yields
\begin{equation}\label{pinskA}
S_1+S_2 - S_{12} \geq \frac12  \left(\tr | \rho_{12} -\rho_1\otimes \rho_2|\right)^2\ .
\end{equation}
By what we just explained above about Pinsker's inequality there is
equality in \eqref{pinskA} only when the mutual information is zero. 
The bound provided by Theorem \ref{quant111}, however, is sharp in
situations in which mutual information is not small, as the following
example shows. 

\medskip
{\it Example:}
Let $\H$ be an $N$-dimensional Hilbert space,
and consider  the (unique)
 $N$-particle Slater determinantal state $\H ^{\wedge N}$. Let $\rho$ be its
two-particle reduced density
 matrix, which is the normalized projection onto $\H \wedge \H \in \H
\otimes \H$.  Let $\sigma$ be the 
 tensor product of the one particle reduced density
 matrix with itself, so that $\sigma$ is the normalized identity. One then
readily computes two equalities:
\begin{equation}
 D(\rho||\sigma) =  
\ln\left(\frac{2N}{N-1}\right) = -2\ln  \tr \left[ \rho^{1/2} \sigma^{1/2}
\right].
\end{equation}
and this is exceeds $\ln(2) \approx 0.693$. 
Together with \eqref{subrel}, this shows that the two sides of
inequality \eqref{111state} are equal. 

On the other hand, the 
right side of \eqref{pinskA} is 
$$ \tfrac12(\tr |\rho - \sigma|)^2 =  \tfrac12
\left(\frac{N+1}{N}\right)^2 \ .
$$
which is approximately $0.5$ for large $N$. 

 Theorem \ref{quant111} is used in \cite{CLF} to prove extremal properties
of Slater determinantal states for entropy and measures of entanglement.
This proof
requires precisely the kind of sharpness displayed in this example.

\medskip

{\it Second proof of Theorem \ref{quant111}:}
Instead of using Pinsker's inequality to bound $ D(\rho||\sigma)  $ from
below, one can use the Renyi entropy bound
 \begin{equation}\label{222}
 D(\rho||\sigma)  \geq   -2\ln  \tr \left[ \rho^{1/2} \sigma^{1/2} \right]
\ .
 \end{equation}
Our second proof of Theorem \ref{quant111} is obtained by combining
\eqref{222} with \eqref{subrel}.

M. Wilde has pointed out to us that \eqref{222} is a consequence of
the monotonicity of the relative Renyi entropy $D_\alpha(\rho|| \sigma)
= \tfrac{1}{\alpha-1} \ln \left(   \tr \rho^\alpha \sigma^{1-\alpha}
\right)$
with
respect to the 
parameter $\alpha \in (0,1)$. One compares $D_\alpha $ at $\alpha =
\tfrac12$ and 
$\alpha \to 1$. 

Alternatively, the method of proof of Theorem \ref{quant111} gives another
simple proof of \eqref{222}, as follows.
Apply (\ref{PB}) with 
$H = - \ln \rho $ and $A = \tfrac12(\ln \sigma  - \ln \rho)\
.$
Then with $\Delta:= \frac12 D(\rho||\sigma)$, by the Peierls-Bogoliubov
inequality and the Golden-Thompson inequality,
\begin{eqnarray}
e^{-\Delta} &=& \exp \left [\tr\rho \frac12 (\ln \sigma  -
\ln\rho)\right]
\leq   \tr \exp\left[ \frac12 (\ln\sigma  +  \ln \rho)\right]\nonumber\\
&\leq &  \tr \exp\left[ \frac12 \ln \sigma     \right]\exp\left[\frac12 
\ln\rho\right]
= \tr \left[ \sigma^{1/2} \rho^{1/2} \right]\ .
\end{eqnarray}

\subsection{Monotonicity of Relative Entropy}

Another application of this method, this time using a deeper theorem -- the
{\it triple-matrix generalization} of the Golden-Thompson inequality 
\cite{wy}. 
Monotonicity of relative entropy is the inequality
\begin{equation} \label{difrel}
D(\rho_{12} ||\sigma_{12}) \geq D(\rho_{1} ||\sigma_{1}).
\end{equation}
and our goal is to find a remainder term for the difference of these two 
quantities.  

A more general result follows immediately. Stinespring's Theorem  says 
any CPT (completely positive trace preserving) 
map $T$ can writen as a partial trace composed with
a unitary transformation, and thus (\ref{difrel}) implies
\begin{equation} \label{difrel2}
D(\rho ||\sigma) \geq D(T(\rho) ||T(\sigma)) 
\end{equation}
for all density matrices $\rho,\sigma$ on $\H$ and all CPT maps
$T:\mathcal{B}(\H) \to \mathcal{B}(\mathcal{K})$, where
$\mathcal{K}$ is another Hilbert space. 

\begin{thm}[Relative entropy monotonicity bound]
\begin{equation} \label{eqln1}
D(\rho_{12} ||\sigma_{12}) - D(\rho_{1} ||\sigma_{1}) \geq
\tr \left[ \sqrt \rho_{12} - 
\exp\left\{ \tfrac12 
\ln\sigma_{12} - \tfrac12 \ln \sigma_1 +\tfrac12  \ln \rho_1 
\right\}\right]^2.
\end{equation}
In particular,
there is equality in \eqref{difrel} if and only if 
\begin{equation} \label{eqln}
 \ln \rho_{12} - \ln \sigma_{12} =  \ln \rho_{1}\otimes I_2 - \ln
\sigma_{1} \otimes I_2.
\end{equation}
\end{thm}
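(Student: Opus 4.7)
The plan is to follow the template of the proof of Theorem~\ref{quant111}, namely the Peierls-Bogoliubov inequality~\eqref{PB} followed by Golden-Thompson, now supplemented by the triple-matrix generalization of Golden-Thompson from~\cite{wy} to control a ``normalization'' term that is absent in the subadditivity setting. Write $K := \ln\sigma_{12} - \ln\sigma_1 + \ln\rho_1$ (with factors $\otimes I_2$ left implicit) and $Y := e^{K/2}$, so that $Y$ is the operator appearing on the right-hand side of~\eqref{eqln1}. Set $\Delta := D(\rho_{12}||\sigma_{12}) - D(\rho_1||\sigma_1)$. The plan is to establish separately that
\begin{equation*}
\tr\bigl(\sqrt{\rho_{12}}\,Y\bigr) \geq e^{-\Delta/2}
\qquad\text{and}\qquad
\tr Y^2 \leq 1,
\end{equation*}
and then to combine these via an expansion of the square.

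For the first inequality I take $H = -\ln\rho_{12}$ (so $\tr e^{-H}=1$) and $A = \tfrac12(K - \ln\rho_{12})$ in~\eqref{PB}. Unpacking the two relative entropies gives $\tr(A\rho_{12}) = -\Delta/2$, and since $-H+A = \tfrac12\ln\rho_{12} + \tfrac12 K$, the Peierls-Bogoliubov inequality reads
\begin{equation*}
\tr\exp\bigl(\tfrac12\ln\rho_{12} + \tfrac12 K\bigr) \geq e^{-\Delta/2}.
\end{equation*}
The ordinary two-matrix Golden-Thompson inequality applied to the left side then yields $\tr(\sqrt{\rho_{12}}\,Y) \geq e^{-\Delta/2}$.

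For the second inequality I apply the triple-matrix inequality of~\cite{wy} with $\sigma_{12},\rho_1,\sigma_1$ in the three roles:
\begin{equation*}
\tr Y^2 = \tr\exp(\ln\sigma_{12} - \ln\sigma_1 + \ln\rho_1) \leq \int_0^\infty \tr\bigl[\sigma_{12}(\sigma_1+s)^{-1}\rho_1(\sigma_1+s)^{-1}\bigr]\,\dd s.
\end{equation*}
Cyclicity of the trace and the partial-trace identity $\tr_2\sigma_{12}=\sigma_1$ reduce the integrand to $\tr_1\bigl[\rho_1(\sigma_1+s)^{-1}\sigma_1(\sigma_1+s)^{-1}\bigr]$, and the spectral-calculus evaluation $\int_0^\infty(\sigma_1+s)^{-1}\sigma_1(\sigma_1+s)^{-1}\,\dd s = I_1$ collapses the right-hand side to $\tr\rho_1 = 1$.

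Combining the two estimates with $\tr\rho_{12}=1$ and expanding the square,
\begin{equation*}
\tr[\sqrt{\rho_{12}} - Y]^2 = \tr\rho_{12} + \tr Y^2 - 2\tr(\sqrt{\rho_{12}}\,Y) \leq 2(1-e^{-\Delta/2}) \leq \Delta,
\end{equation*}
the last step being the elementary inequality $2(1-e^{-t/2}) \leq t$ for $t\in\R$, with equality iff $t=0$. Positivity of the leftmost side automatically forces $\Delta\geq 0$, reproving monotonicity~\eqref{difrel}. For the equality case, \eqref{difrel} being tight gives $\Delta = 0$, and the scalar bound is then tight as well, which forces $\sqrt{\rho_{12}} = Y$; taking logarithms yields~\eqref{eqln}, while the converse is a direct substitution. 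The main obstacle is the bound $\tr Y^2 \leq 1$: without the triple-matrix Golden-Thompson there is no obvious handle on this term, and the clean cancellation that produces exactly $\tr\rho_1 = 1$ is what allows the proof to close.
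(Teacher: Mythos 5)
Your proof is correct and follows essentially the same route as the paper: Peierls--Bogoliubov with $H=-\ln\rho_{12}$, the two-matrix Golden--Thompson inequality, and the triple-matrix Golden--Thompson inequality of \cite{wy} to get $\tr Y^2\le 1$, followed by expanding the square. The only cosmetic differences are that you carry the full difference $\Delta$ rather than half of it and invoke $e^{-t/2}\ge 1-t/2$ in place of $-\ln(1-y)\ge y$, which are equivalent rearrangements of the same elementary estimate.
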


The equality condition (\ref{eqln}) was given by Ruskai in \cite{Rus}. The
remainder term (\ref{eqln1}) is new. 

\begin{proof}
We introduce 
\begin{equation}
\Delta = \tfrac12 D(\rho_{12} ||\sigma_{12}) - \tfrac12 D(\rho_{1}
||\sigma_{1}) =   \tfrac12 \tr_{12}\  \rho_{12} \left\{\ln \rho_{12} -
\ln \sigma_{12} +\ln \sigma_1 - \ln \rho_1 \right\},
\end{equation}
where we omit the $\otimes I_2$ for simplicity of notation.  

We apply the Peierls- Bogoliubov inequality as before:
$$
{\rm e}^{- \Delta} \leq \tr \exp \tfrac12 \{ \ln \rho_{12} +\ln
\sigma_{12} - \ln \sigma_1 + \ln \rho_1  \}.
$$
We then employ the Golden-Thompson and the Schwarz inequalities to conclude
that 
\begin{multline}
{\rm e}^{-\Delta} \leq \tr \sqrt{\rho_{12} }    
\exp\left\{ \tfrac12 
\ln \sigma_{12} - \tfrac12 \ln \sigma_1 +\tfrac12  \ln \rho_1  \right\}\\
\leq \left(\tr \rho_{12} \right)^{1/2}   \left( \tr  \exp\left\{ 
\ln\sigma_{12} -  \ln \sigma_1 +\ln \rho_1  \right\} \right)^{1/2}
\nonumber\end{multline}
By the triple matrix generalization of the Golden-Thompson inequality
\cite{wy}
$$
\tr \exp\left\{ 
\ln \sigma_{12} -  \ln \sigma_1 +\ln \rho_1  \right\} \leq  
\tr \int_0^\infty \sigma_{12} \   (t + \sigma_1)^{-1} \rho_1 \ (t +
\sigma_1)^{-1}  {\rm d}t
= \tr_1 \ \rho_1  = 1.
$$
Therefore, 
\begin{multline}
{\rm e}^{-\Delta} \leq 
\tr \sqrt{\rho_{12} }    
\exp\left\{ \tfrac12
\ln\sigma_{12} - \tfrac12 \ln \sigma_1 +\tfrac12  \ln \rho_1  \right\}
\leq \\ 
1- \tfrac12 \tr \left[ \sqrt \rho_{12} - 
\exp\left\{ \tfrac12 
\ln\sigma_{12} - \tfrac12 \ln \sigma_1 +\tfrac12  \ln \rho_1 
\right\}\right]^2.
\end{multline}
Since $ -\ln(1-y) \geq y$, as stated before,
$$
\Delta \geq   \tfrac12 \tr \left[ \sqrt \rho_{12} - 
\exp\left\{ \tfrac12 
\ln\sigma_{12} - \tfrac12 \ln \sigma_1 +\tfrac12  \ln \rho_1 
\right\}\right]^2,
$$
and this is zero if and only if \eqref{eqln} is satisfied.
\end{proof}

Petz \cite{P} has shown that $\Delta =0 $ if and only if $
T_\rho (\sigma_1) =\sigma_{12}$, where, for any density matrix $\tau_1$ on
$\H_1$, 
$$
T_\rho(\tau_1)  = \rho_{12}^{1/2} \, \rho_1^{-1/2} \, \tau_1 \, 
\rho_1^{-1/2} \rho_{12}^{1/2}.
$$
This $T_\rho$ is a CPT map.    As a consequence of the monotonicity of the 
relative entropy, we have $D(  T_\rho (\rho_1)) || T_\rho (\sigma_1) \leq
D(\rho_1 || \sigma_1)$. 
Since $T_\rho (\rho_1) =\rho_{12}$, by construction, Petz's condition
$T_\rho (\sigma_1) =\sigma_{12}$ gives 
$$D(\rho_{12}||\sigma_{12}) = D(  T_\rho (\rho_1)) || T_\rho (\sigma_1))\
.$$
Combining this with (\ref{difrel}) and (\ref{difrel2}), we obtain
$$D(\rho_{12}||\sigma_{12})  \geq D(\rho_1||\sigma_1) \geq  D(  T_\rho
(\rho_1)) 
|| T_\rho (\sigma_1) = D(\rho_{12}||\sigma_{12}) \ ,$$
and hence all inequalities must be equalities. 
We
see, therefore,  that Petz's condition is a sufficient condition for
equality. 
In contrast, condition (\ref{eqln}) is immediately seen to be
sufficient by direct calculation.
Petz's  condition and (\ref{eqln}) are equivalent, but it is not easy to
see this. 

Neither condition (\ref{eqln}), nor Petz's equivalent condition $\sigma_{12}
= T_\rho (\sigma_1)$, is transparent. It was  Hayden, Jozsa, Petz and
Winter \cite{HJPW} who characterized the solution set for these equations
to hold. 
They are satisfied if and only if
\begin{equation}
\H_1\otimes \H_2 = \oplus_{j=1}^N \H_{1,j}\otimes \H_{2,j} 
\end{equation}
\begin{equation}
 \rho_{12} = \oplus_{j=1}^N q_j \omega_j\otimes \rho_{2,j}\quad{\rm
and}\quad 
\sigma_{12} = \oplus_{j=1}^N r_j \omega_j\otimes \sigma_{2,j}
\end{equation}
where the $q_j,r_j$ are non-negative numbers summing to one and for each
$j$,
$\rho_{2,j},\sigma_{2,j}$ are density matrices on $\H_{2,j}$, and
$\omega_j$ is a density matrix on $\H_{1,j}$.

Monotonicity of the relative entropy provides one route to strong
subadditivity (SSA) of quantum entropy, as in \cite{LR}. One of the
motivations for seeking remainder terms in the monotonicity of relative
entropy is to obtain remainder terms for SSA. One such remainder term was
given in our paper \cite{cl}.

{\bf Acknowlegements:} We thank Mark Wilde and Lin Zhang for a careful
reading of an
earlier version of this paper and for useful comments.  This
work was
partially
supported by U.S. National Science Foundation
grants DMS-1201354 (E.A.C.) and PHY-0965859  and PHY-1265118 (E.H.L.).

\end{document}